\newcommand{\obs}{{\textrm{obs}}}
\newcommand{\kn}{{\textrm{kn}}}
\newcommand{\un}{{\textrm{un}}}
\newcommand{\textand}{{\textrm{and}}}
\newcommand{\textor}{{\textrm{or}}}
\newcommand{\bg}{{\textrm{bg}}}
\newcommand{\given}{\mid}
\newcommand{\joint}{{\textrm{joint}}}
\newcommand{\nnodes}{20}
\newtheorem{defn}{Definition}
\newtheorem{example}{Example}
\newtheorem{thm}{Theorem}
\newtheorem{assumption}{Assumption}
\newtheorem{remark}{Remark}
\newtheorem*{assumption*}{Assumption}
\newcommand\independent{\protect\mathpalette{\protect\independenT}{\perp}}
\def\independenT#1#2{\mathrel{\rlap{$#1#2$}\mkern2mu{#1#2}}}
\newcommand{\kron}{\mathbbm{1}}
\newcommand{\cE}{\mathcal{E}}
\newcommand{\cG}{\mathcal{G}}
\newcommand{\cH}{\mathcal{H}}
\newcommand{\cI}{\mathcal{I}}
\newcommand{\cM}{\mathcal{M}}
\newcommand{\cN}{\mathcal{N}}
\newcommand{\bbP}{\mathbb{P}}
\DeclareMathOperator{\pa}{pa}
\DeclareMathOperator{\ch}{ch}
\DeclareMathOperator{\skel}{skel}
\DeclareMathOperator{\descendants}{de}
\DeclareMathOperator{\pre}{pre}
\newcommand{\rvI}{{\mathsf{I}}}
\title{Permutation-Based Causal Structure Learning \\ with Unknown Intervention Targets}
\author{ {\bf Chandler Squires} \\
LIDS, IDSS \\
MIT\\
\texttt{csquires@mit.edu} \\
\And
{\bf Yuhao Wang}  \\
Statistical Laboratory         \\
University of Cambridge \\
\texttt{yw505@cam.ac.uk}\\
\And
{\bf Caroline Uhler}   \\
LIDS, IDSS \\
MIT    \\
\texttt{cuhler@mit.edu}\\
}
\begin{document}

\maketitle
% \runningtitle{Structure Learning with Unknown Intervention Targets}

\begin{abstract}
  We consider the problem of estimating causal DAG models from a mix of observational and interventional data, when the intervention targets are partially or completely unknown. This problem is highly relevant for example in genomics, since gene knockout technologies are known to have off-target effects. We characterize the interventional Markov equivalence class of DAGs that can be identified from interventional data with unknown intervention targets. In addition, we propose a provably consistent algorithm for learning the interventional Markov equivalence class from such data. The proposed algorithm greedily searches over the space of permutations to minimize a novel score function. The algorithm is nonparametric, which is particularly important for applications to genomics, where the relationships between variables are often non-linear and the distribution non-Gaussian. We demonstrate the performance of our algorithm on synthetic and biological datasets. Links to an implementation of our algorithm and to a reproducible code base for our experiments can be found at \href{https://uhlerlab.github.io/causaldag/utigsp}{https://uhlerlab.github.io/causaldag/utigsp}.
\end{abstract}

\section{INTRODUCTION}
Causal models are a prerequisite for answering scientific, sociological, and technological questions across disciplines~\citep{FLN00,Pearl:00,RHB00,SGS00}; examples are ``what genetic activity is responsible for cancer?'' or ``what is the effect on unemployment of raising minimum wage?''. This necessity has generated intense interest in \emph{causal structure learning}, i.e., the problem of learning a causal graphical model that represents the causal relationships of different elements in a complex system from data. Typically, the causal model is in the form of a \emph{directed acyclic graph} (DAG).

Since different causal DAG models can generate the same observational distribution, a DAG is in general only identifiable up to its \textit{Markov equivalence class} (\emph{MEC}) from observational data~\citep{verma1990equivalence}. Interventional data is necessary for reducing the ambiguity. Given observational and interventional data the
identifiability of the underlying causal DAG model improves to a smaller equivalence class known as the $\cI$-MEC~\citep{HB12,YKU18}. With the advent of gene editing technologies in genomics, high-throughput interventional gene expression data is being produced~\citep{perturb-seq}. Therefore, an important problem in this field is to fully utilize such data to infer the finest equivalence class of causal DAGs describing the data. This is made particularly challenging since gene knockout experiments are known to have severe off-target effects, i.e., the CRISPR-Cas gene-editing technology performs cleavage at unknown genome sites other than their intended target~\citep{Fu13,Wang15}. Not accounting for these additional targets while learning causal structure leads to model misspecification, and thus incorrect conclusions. Hence it is critical to develop causal inference methods that can make use of observational and interventional data when the intervention targets are partially or completely unknown. This is the purpose of the present paper.

A variety of methods have been proposed for causal structure learning from observational and interventional data when the intervention targets are known. This includes the algorithms GIES~\citep{HB12} and IGSP~\citep{WSYU17,YKU18} under the assumption of causal sufficiency, i.e., when there are no latent confounders, and ACI~\citep{MCM16}, HEJ~\citep{HEJ14} and COmbINE~\citep{TT15} that allow for latent confounders. Since these algorithms assume that all intervention targets are known a priori, they will in general be inconsistent in the presence of off-target effects, which may misinform downstream decision-making. To make use of interventional data with unknown intervention targets, \citet{EM07} proposed a dynamic programming algorithm. However, it is limited both in terms of scalability and requiring parametric assumptions. A different approach to this problem is given by the \emph{invariant causal inference framework}~\citep{Mein16,RHPM15,GSKZ17}. While this approach comes with consistency guarantees, it makes various assumptions that are unlikely to hold in the context of genomics. In particular, interventions can only affect the distribution of the internal noises of the intervened targets and the functional relationship between each node and its parents is assumed to be linear. Most recently, \citet{mooij2016joint} proposed the \emph{Joint Causal Inference} (\emph{JCI}) framework, which can be used to adapt an existing observational causal inference algorithm into a method for causal structure learning from interventional data with unknown targets. In this paper, we develop a new algorithm for learning from interventional data with unknown targets, and will also compare our algorithm to the JCI framework.% and the IGSP algorithm. 

Our main contributions are as follows:

\begin{itemize}
\item We show that under a specific faithfulness assumption, all intervention targets are identifiable. Importantly, this implies that the degree of identifiability of the underlying causal model is the same with unknown intervention targets as when the intervention targets are known. 

\item By introducing a score function that is minimized by graphs in the true $\cI$-MEC, we develop a provably consistent greedy algorithm that simultaneously learns the intervention targets as well as the $\cI$-MEC from a mix of observational and interventional data with unknown intervention targets. 

\item We demonstrate the efficacy of our algorithm on synthetic and biological datasets. %(Section \ref{sec:evaluation}). We end this paper with a discussion (Section~\ref{sec:discussion}).
\end{itemize}

\section{PRELIMINARIES AND RELATED WORK}
\label{sec:preliminaries}

\subsection{Causal DAG model}
 Let  $\cG = ([p], \cE)$ be a directed acyclic graph (DAG) with node set $[p] := \{1, \cdots, p\}$ and edge set $\cE$ representing a causal model where each node $i$ is associated with a random variable $X_i$. Let $f$ denote the density of the data-generating distribution $\bbP$ over the random vector $X := (X_1, \cdots, X_p)$. By the causal Markov property, the density function $f$ is \emph{Markov} with respect to the DAG $\cG$, i.e., the density function $f$ \emph{factorizes} with respect to the DAG $\cG$: 
\begin{align*}
f(x) = \prod_{i \in [p]} f_i(x_i \mid x_{\pa_\cG(i)}),
\end{align*}
where $\pa_\cG(i)$ denotes the set of nodes that are parents of $i$ in the DAG $\cG$. A basic result for DAG models~\citep[Section~3.2.2]{LAU96} is that $\bbP$ is Markov with respect to a DAG $\cG$ if and only if 
the set of conditional independence relations in $\bbP$ is entailed by the set of d-separation statements\footnote{d-separation is reviewed in Supplementary Material~A} in $\cG$, i.e., for any disjoint sets $A$, $B$ and $C$, $X_A$ is conditionally independent from $X_B$ given $X_C$ whenever $A$ is d-separated from $B$ given $C$. The \emph{faithfulness assumption} that is commonly assumed in existing causal inference algorithms is the assertion that the converse is also true, 
i.e., that the set of conditional independence relations in $\bbP$ entail all d-separation statements in $\cG$. The main justification of the faithfulness assumption is that the Lebesgue measure of distributions unfaithful with respect to a DAG $\cG$ is zero~\citep{Pearl:00}.

Let $\cM(\cG)$ denote the set of distributions that are Markov with respect to $\cG$. Two DAGs $\cG_1$ and $\cG_2$ are \emph{Markov equivalent}, denoted $\cG_1 \sim \cG_2$, if $\cM(\cG_1) = \cM(\cG_2)$. \citet{verma1990equivalence} showed that $\cG_1 \sim \cG_2$ if and only if $\cG_1$ and $\cG_2$ have the same skeleton and v-structures. Moreover, if $\cG_1\sim \cG_2$, then $\cG_1$ and $\cG_2$ can be transformed to one another by a sequence of \emph{covered edge reversals}, where we call an edge $i \to j$ in a DAG $\cG$ \emph{covered} if $\pa_{\cG}(j) = \pa_{\cG}(i) \cup \{i\}$. By a slight abuse of notation, we will also use $\cM(\cG)$ to denote the set of DAGs that are Markov equivalent to $\cG$, i.e., the Markov equivalence class of $\cG$.

\subsection{Interventions}
\label{sec_int}

Interventions on random variables can be used to improve the identifiability of the underlying causal model. A theoretical framework for modeling interventions was developed in \citet{eberhardt2007interventions}. %To model interventions, various interventional models have been exploited. One popular model is the 
A \emph{perfect intervention} assumes that all causal dependencies between intervened targets and their causes are removed~\citep{eberhardt2007interventions}. As an example, consider a perfectly performed gene knockout experiment, where the expression of a gene is set to zero and hence all interactions between gene $i$ and its upstream regulators are eliminated.

In practice, interventions often cannot fully remove the causal dependencies between an intervened target and its causes, but rather \emph{modify} their causal relationship~\citep{eberhardt2007interventions}. For example, in genomics, an intervention may only inhibit the expression of a gene~\citep{DLQ16}. Such interventions are known as \emph{imperfect}. The issue of whether or not an intervention is perfect or imperfect is conceptually orthogonal to the issue of whether or not the intervention has unknown targets. For example, a chemical treatment that perfectly prevents the expression of an unknown handful of genes would be an example of a \textit{perfect} intervention with \textit{unknown} targets. On the other hand, injecting a cell with extra copies of mRNA from gene~A would be an example of an \textit{imperfect} intervention with no unknown targets, since the expression of gene~A still depends on the gene regulatory network, which has not been affected. This paper is concerned with the problem of causal structure discovery from interventional data (from perfect or imperfect interventions) with \mbox{\emph{unknown intervention targets}.} 

Let $I\subseteq [p]$ denote a perfect or imperfect intervention target and let $f^\obs$ and $f^I$ denote the densities of the observational (i.e., no interventions) and interventional distributions, respectively. A pair $(f^\obs, f^I)$ is $I$-\emph{Markov} with respect to a DAG $\cG$ if $f^\obs$ and $f^I$ are Markov with respect to $\cG$ and for any non-intervened variable $j \in [p] \setminus I$, it holds  that 
\begin{equation}
    \label{inv_property}
    f^I(x_j \mid x_{\pa_\cG(j)}) = f^\obs (x_j \mid x_{\pa_\cG(j)}),
\end{equation}
i.e., the conditional distributions of the non-intervened variables are invariant across the observational and interventional distributions~\citep{YKU18}. This $I$-Markov property implies the following factorization of the interventional distribution $f^I$ with respect to $\cG$:
 \begin{align}\label{eq:int}
f^I(x) = \prod_{i \not\in I} f^\obs(x_i \mid x_{\pa_\cG(i)}) \prod_{i \in I} f^I(x_i \mid x_{\pa_\cG(i)}).
\end{align}
Let $\cM_I(\cG)$ denote the set of distributions $I$-Markov with respect to $\cG$. Then, as in the non-interventional setting, two DAGs $\cG_1$ and $\cG_2$ are in the same $I$-\emph{Markov equivalence class}, if $\cM_I(\cG_1) = \cM_I(\cG_2)$~\citep{HB12, YKU18}.

\subsection{Causal structure discovery algorithms}
\label{subsec:background-algorithms}

Causal inference algorithms can largely be categorized into three approaches, namely \emph{constraint-based} methods, \emph{score-based} methods, and their hybrids. Constraint-based methods, including the prominent PC algorithm~\citep{SGS00}, learn the causal model by treating causal inference as a constraint satisfaction problem and estimate the underlying Markov equivalence class by a sequence of conditional independence tests. Score-based methods, such as GES~\citep{Meek1997} and its interventional adaptation GIES~\citep{HB12}, assign a score to each Markov equivalence class and learn the Markov equivalence class of the data-generating DAG by greedily optimizing a penalized likelihood score. In addition, hybrid algorithms such as GSP~\citep{SWMU17} and its interventional adaptation IGSP~\citep{WSYU17,YKU18} have been proposed that construct a score function based on conditional independence tests. All these algorithms assume either that there is no interventional data or that the intervention targets are known. The main contributions of this paper is to provide a consistent causal inference algorithm in the setting where the intervention targets are unknown.

Recent work \citep{mooij2016joint} introduces a framework for causal structure learning using data from heterogeneous ``contexts", including data from different interventions, to which we will limit our discussion. The \emph{joint causal inference} (\emph{JCI}) framework associates with intervention $I_k$ a binary random variable $\rvI_k$, with $\rvI_k = 1$ denoting that the data comes from the distribution $k$. The vector $\rvI$ has at most a single non-zero entry (i.e., $|\rvI|_0 \leq 1$), and $\rvI = \mathbf{0}$ denotes that the data comes from $f^\obs$. Thus, the joint distribution of the \textit{system variables} $x$ and the \textit{intervention variables} $\rvI$ is
$$
f^\joint(x, \rvI) = f^\obs(x)^{\kron_{\rvI = \mathbf{0}}} \prod_{k=1}^K f^k(x)^{\kron_{\rvI_k = 1}}.
$$
This distribution can be represented by the \emph{JCI-DAG}, denoted $\cG^\joint_*$, which fuses the true underlying causal DAG $\cG^*$ with a complete graph over the intervention variables, and adds the edge $\rvI_k \rightarrow x_i$ if $i \in I_k$.

% The JCI framework is designed to learn a causal structure over \textit{both} the original set of ``system" variables and the set of auxiliary interventional variables using existing observational causal structure discovery algorithms. 

To apply JCI to a causal structure learning algorithm, the algorithm must be capable of incorporating the following assumptions as background information:
\begin{itemize}
    \item \textbf{``Exogeneity"}: System variables do not cause intervention variables.
    \item \textbf{``Generic context"}: The intervention variables are fully connected.
\end{itemize}

The JCI framework has been applied to a variety of constraint-based and scored-based methods, but has not been applied to any hybrid methods. In Section \ref{sec:algorithm}, we provide an adaptation of GSP that can incorporate the background information required for JCI, leading to a new algorithm, \textit{JCI-GSP}. Then, we show that the performance of JCI-GSP suffers from treating intervention variables equivalently to system variables, and propose an improved algorithm, \textit{Unknown-Target IGSP (UT-IGSP)} to overcome this problem.

% Note: there's some discussion of the state of the field before talking about JCI-GSP
%\textcolor{red}{Add paragraph: Taken together, this is where the field is, this is what is still missing and what we are addressing in this paper.}

\section{IDENTIFIABILITY WITH UNKNOWN INTERVENTION TARGETS}
\label{sec:identifiability}

In order to define consistency of a causal inference algorithm in the setting where the intervention targets are unknown, we first need to characterize the interventional Markov equivalence class in this setting. In the following, we first briefly review the graphical characterization of the interventional Markov equivalence class when all intervention targets are known and then show that the equivalence class is the same even in the setting where the intervention targets are unknown. This means that the degree of identifiability of the underlying causal DAG model is unchanged whether the intervention targets are known or unknown. 

\subsection{Preliminaries}
\label{sec:characterization}
We consider the setting where we have data from $K$ interventional experiments. Let $I^k$ denote the intervention targets of experiment $k$ and let $f^k$ denote the corresponding interventional distribution. We denote the full list of intervention targets by $\cI = (I^1, \cdots, I^K)$. Notice that we assume throughout that we also have access to purely observational data. This %is a practically relevant assumption that 
assumption is satisfied in most experimental designs in practice.

The $I$-Markov property in Section~\ref{sec_int} can easily be extended to the setting of multiple interventional experiments by replacing the invariance property (\ref{inv_property}) by
\begin{align*}
f^k(x_i \mid x_{\pa_\cG(i)}) = f^{k'}(x_i \mid x_{\pa_\cG(i)})
\end{align*}
for all $k, k' \in [K]$ and all random variables $X_i$ where $i \not\in I$ and $i \not\in I'$ (see also \citet{YKU18}). We denote the resulting $\cI$-Markov equivalence class with respect to a DAG $\cG$ by $\cM_\cI(\cG)$ and the equivalence relation by $\sim_\cI$.

A graphical characterization of $\cI$-Markov equivalence was provided by \citet{YKU18}. Let  $\cG^\cI$ denote the DAG $\cG$ along with additional $\cI$-vertices $\{\zeta_k \}_{k \in [K]}$ and $\cI$-edges $\{\zeta_k \rightarrow i \}_{i \in I^k, I^k \in \cI}$ (this is known as the \emph{interventional DAG}, or \emph{$\cI$-DAG}; a concrete example is provided in Figure~\ref{fig:I-DAG}). Then $\cG_1 \sim_\cI \cG_2$ if and only if $\cG_1^\cI$ and $\cG_2^\cI$ have the same skeleton and v-structures. Similarly as in the non-interventional setting, the $\cI$-Markov property connects the $\cI$-DAG to invariance of conditional distributions via d-separation.
Specifically, if $\{f^k\}_{k \in [K]}$ is $\cI$-Markov with respect to $\cG$, then for disjoint $A$ and $C$, $f^I(x_A \mid x_C) = f^\obs(x_A \mid x_C)$ whenever $A$ and $\zeta_I$ are d-separated given $C \cup \zeta_{\cI \setminus I}$, denoted as $(A \independent \zeta_I \mid C \cup \zeta_{\cI \setminus I})_{\cG^\cI}$. 
The $\cI$-Markov equivalence class of a graph $\cG$ can be represented by a partially directed graph, the $\cI$-essential graph, which has a directed edge $i \rightarrow j$ in the $\cI$-essential graph if the edge $i \to j$ is oriented in the same direction for every DAG in $\cM_\cI(\cG)$, and has an undirected edge $i - j$ if the edge is oriented in different directions for DAGs in $\cM_\cI(\cG)$.

\begin{figure}[t!]
\centering
\includegraphics[width=.3\textwidth]{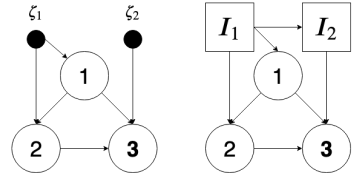}
\vspace{-.2cm}
\caption{The $\cI$-DAG (left) $\cG^\cI$ and JCI-DAG (right) $\cG^\joint$ for a complete DAG and the interventions $I_1 = \{1,2\}$ and $I_2 = \{3\}$.} \label{fig:I-DAG}
\end{figure}

\subsection{Main results}
\label{sec:extension}

Let the estimated set of intervention targets be
$$
\hat{I}^k = \{ x_i \in [p] \mid f^k(x_i \mid x_S) \neq f^\obs(x_i \mid x_S) ~\forall~ S \subseteq [p] \setminus \{i\} \}.
$$
By definition, $f^k(x_i \mid x_{\pa_\cG(i)}) = f^\obs(x_i \mid x_{\pa_\cG(i)})$ for $i \not\in I^k$, so we always have $\hat{I}^k \subseteq I^k$. The following assumption ensures that $\hat{I}^k = I^k$.

\begin{assumption}[Direct $\cI$-faithfulness]\label{ass:identifiability}
Given an interventional distribution $f^k$ with targets $I^k$, we assume that $f^k(x_i \mid x_S) \neq f^\obs(x_i \mid x_S)$ for any node $i \in I^k$ and any subset $S \subseteq [p] \setminus \{i\}$.
\end{assumption}
This assumption rules out situations in which node $i$ has been intervened on, but there is some set $S$ for which the conditional distribution $f^k(x_i \mid x_S)$ is unaffected. Note that this is equivalent to adjacency-faithfulness between intervention variables and their children in the JCI-DAG.

Assumption~\ref{ass:identifiability} is not required by known-target interventional causal inference algorithms (see for example~\citet{TP01,YKU18})\footnote{We show in Supplementary Material \ref{appendix:incomparability} that Assumption~\ref{ass:identifiability} is incomparable to the assumptions in \citet{YKU18}}. Thus, it is of interest to understand whether Assumption~\ref{ass:identifiability} is truly necessary for causal inference in the setting with unknown intervention targets.  We end this section with the following example showing that when Assumption~\ref{ass:identifiability} is violated, the underlying $\cI$-Markov equivalence class may not be identifiable, i.e., Assumption~\ref{ass:identifiability} is necessary for any causal inference algorithm in the setting where the intervention targets are unknown.

\begin{example}[Necessity of Assumption~\ref{ass:identifiability}] 
Let $f$ be Markov to the DAG $1 \rightarrow 2$ and $I_1 = \{2\}$, with $f^\obs(x_1) = \cN(0, 1)$, $f^\obs(x_2 \mid x_1) = \cN(x_1, 1)$, and $f^1(x_2 \mid x_1) = \cN(0.5x_1, 1.75)$. We have $f^\obs(x_2) = f^1(x_2) = \cN(0, 2)$, violating Assumption \ref{ass:identifiability}. The DAG $2 \rightarrow 1$ with intervention set $I'_1 = \{1\}$ and distributions $g^\obs(x_2) = \cN(0, 2)$, $g^\obs(x_1 \mid x_2) = \cN(0.5x_2, 0.5)$ and $g^1(x_1 \mid x_2) = \cN(0.25x_2, 0.875)$ gives the same set of interventional distributions, so one cannot distinguish between the two DAGs despite the fact that they are in different interventional Markov equivalence classes.
\end{example}

\section{ALGORITHM AND ITS CONSISTENCY}
\label{sec:algorithm}

In Section~\ref{sec:identifiability}, we have shown that the full list of intervention targets is identifiable and hence the underlying $\cI$-MEC is the same as in the setting where all intervention targets are known. One approach for learning the $\cI$-MEC is to first estimate $\{\hat{I}^k\}_{k \in [K]}$, and then apply algorithms such as IGSP~\citep{YKU18} that operates in the setting where the intervention targets are known. However, estimating $I^k$ directly may require an exhaustive search over all $2^{p-1}$ subsets of variables, which is intractable for real-world applications with hundreds or thousands of nodes.

In the following, we provide a greedy algorithm that learns the $\cI$-MEC as well as a complete list of intervention targets \textit{simultaneously}. Importantly, we show that this greedy algorithm is consistent, i.e., it outputs the correct $\cI$-MEC with increasing sample size.

\subsection{Preliminaries}
\label{subsec:alg-prelims}

The proposed algorithm is an interventional adaptation of the greedy sparsest permutation (GSP) algorithm \citep{SWMU17} that was proposed for causal inference in the purely observational setting. GSP is a permutation-based causal inference algorithm that associates a score to each permutation $\pi$, i.e., an ordering of the random variables $X_1, \cdots, X_p$. It then greedily moves between permutations to optimize the given score function. More precisely, each permutation $\pi$ is associated to its minimal I-MAP, i.e., the DAG $\cG_\pi := ([p], \cE_\pi)$ given by:
$$
i \rightarrow j \in \cE_\pi \;\Longleftrightarrow\; i <_\pi j \;\textrm{and}
 \quad i \not\independent j \mid \pre_\pi(i, j) \setminus \{i, j\}.
$$

Where $\pre_\pi(i, j)$ denotes all nodes coming before either $i$ or $j$ in the permutation $\pi$. From any starting permutation $\pi_0$, GSP uses a depth-first-search approach to find a new permutation $\tau$, where the moves between permutations are defined by covered edge reversals. If there exists $\tau$ obtained by a covered edge reversals such that the number of edges in the minimal I-MAP $\cG_\tau$ is strictly smaller than the number of edges in $\cG_{\pi_0}$, i.e., $|\cG_\tau| < | \cG_{\pi_0}|$, then $\pi_0$ is set to $\tau$ and the search continues. Otherwise, $\cM(\cG_{\pi_0})$ is returned. GSP is consistent under the faithfulness assumption, i.e., it outputs the correct Markov equivalence class in the purely observational setting~\citep{SWMU17}.

%In Appendix \ref{appendix:background-knowledge}, we show that GSP is easily capable of handling the exogeneity and generic context assumptions described in \ref{subsec:background-algorithms} without any impact on its consistency guarantees. 

%In JCI-GSP, all CI tests between intervention variables and system variables are of the form $\rvI_k \independent x_i \mid x_C, \rvI_{[K] \setminus \{k\}}$. We briefly note that in the case of interventions, this CI statement is equivalent to the invariance statement $f^k(x_i \mid x_C) = f^\obs(x_i \mid x_C)$, which allows us to state our $\cI$-faithfulness assumption in the following section in the simpler terms of these invariance statements.

\subsection{Main results}

Just like GSP, the proposed algorithm uses a greedy search in the space of permutations to determine the data-generating $\cI$-MEC. Instead of using the number of edges in the minimal I-MAPs as the scoring function, we introduce a new scoring function that can make use of the interventional data without requiring knowledge of the intervention targets.

We consider the following setting: We are given the distributions $f^\obs, f^1, f^2, \cdots, f^K$ based on the intervention targets $\cI := \{I^1, I^2, \cdots, I^K\}$, which are partially known or completely unknown. For each experiment we denote any known intervention targets by $I_\kn^k \subseteq I^k$. Given a permutation $\pi$ and the corresponding minimal IMAP $G_\pi$, we may estimate targets of interventions $k$ as follows:
$$
\cI_\pi^k = I^k_\kn \cup \{i \mid f^\obs(x_i \mid x_{\pa_{G_\pi}(i)}) \neq f^k(x_i \mid x_{\pa_{G_\pi}(i))} \}
$$

and assign the following score function:
\begin{align*}
S(\pi) := |\cG_\pi| + \sum_{k=1}^K |\cI_\pi^k|.
\end{align*}
Here, $|\cG_\pi|$ corresponds to the number of edges in $\cG_\pi$.

To provide some intuition for the  two summands in $S(\pi)$: %of the proposed score function: 
The first summand $|\cG_\pi|$ restricts the global optimum to be in the correct (observational and thus larger) MEC, while the second summand is used to further restrict the global optimum to be within the correct (interventional and thus smaller) $\cI$-MEC. In the finite sample regime, the first summand is estimated by performing conditional independence tests using samples from just the observational distribution. The second summand is estimated by performing conditional invariance tests. In the Gaussian case, this corresponds to testing equality of regression coefficients and conditional variances, as detailed in Supplementary Material~\ref{appendix:conditional-invariance-testing}. In the nonparametric setting, conditional invariance tests can be performed by a combination of nonparametric regression and testing for the equality of the residual distributions, as discussed in \citet{HPM18}. The next remark provides intuition for how the second summand in the score function can pin down the correct $\cI$-Markov equivalence class.

%\textcolor{green}{CHANDLER: I added more details to this remark}
\begin{remark}[Intuition for the score function]
Consider an interventional distribution with intervention targets $I^k \subseteq [p]$ based on the causal DAG $\cG_{\pi^*}$. Under direct $\cI$-faithfulness, $I_\pi^k \supseteq I^k$, so $S(\pi)$ is minimized if we can find $\pa_{G_\pi}(j)$ such that $f^k(x_j \mid x_{\pa_{\cG_\pi}(j)}) = f^\obs(x_j \mid x_{\pa_{\cG_\pi}(j)})$ for all $j \not\in I^k$, in which case $I_\pi^k = I^k$. For example, this invariance will hold if $\pa_{\cG_\pi}(j) = \pa_{\cG^*}(j)$. However, if $f^k$ is $\cI$-faithful to $G^\cI$ and there is some $j \not\in I^k$ such that $j$ is d-connected to $\zeta_{I_k}$ given $\pa_{\cG_\pi}(j) \cup \zeta_{\cI \setminus I_k}$, then $S(\pi)$ will not be minimized. In other words, minimizing the second summand may orient edges and hence increase identifiability of the underlying DAG model.
\end{remark}

The interventional data is not only used to increase the degree of identifiability of the underlying causal model, but also to restrict the search directions in our greedy search algorithm. This is achieved by introducing a more restrictive version of a covered edge.

\begin{defn}
Given a partially unknown intervention set $\cI := \{I^1, \cdots, I^K\}$, an arrow $i \rightarrow j$ in the minimal I-MAP $\cG_\pi$ is \emph{$\cI$-covered} if it is a covered arrow in $\cG_\pi$ and for all $k$ such that $i \in I^k_\kn$, it holds that $f^k(x_j \given x_{\pa_{\cG_\pi}(j)}) \neq f^\obs(x_j \given x_{\pa_{\cG_\pi}(j)})$.
\end{defn}

% In particular, since the set of search directions of UT-IGSP will be given by reversals of $\cI$-covered edges, 

\begin{algorithm}[tb]
	\caption{Unknown-target IGSP (UT-IGSP)}
	\label{alg:igsp}
	\begin{algorithmic}
		\State{\textbf{Input:} Distributions $f^\obs, f^1, \cdots, f^K$ and partially known intervention sets $\cI^\kn := \{I_\kn^1, I_\kn^2 \cdots, I_\kn^K\}$, a starting permutation $\pi_0$.}
		\State{\textbf{Output:} A permutation $\pi$ and associated minimal I-MAP $\cG_{\pi}$, a complete set of estimated intervention targets $\cI := \{I^1_\pi, \cdots, I^K_\pi \}$.}
		\State 1. Set $\pi := \pi_0$;
		\State 2. Using a depth-first search with root $\pi$, search for a permutation $\tau$ such that $S(\tau) < S(\pi)$ and that the corresponding minimal I-MAP $\cG_\tau$ is connected to $\cG_\pi$ by a list of $\cI$-covered arrow reversals. If such $\tau$ exists, set $\pi$ as $\tau$ and continue this step; otherwise, return $\pi$, $\cG_\pi$ and $\cI := \{I^1_\pi, \cdots, I^K_\pi \}$.
	\end{algorithmic}
\end{algorithm}

Our proposed algorithm for causal structure discovery from interventional data with unknown or partially known intervention targets is provided in Algorithm~\ref{alg:igsp} (which we name \emph{UT-IGSP} for \emph{Unknown Target Interventional Greedy Sparsest Permutation} Algorithm). Next, we prove consistency of this algorithm under the following assumption.
\begin{assumption} [$\cI$-faithfulness assumption]\label{assumption:faithfulness}
	Let $\cI$ be a list of intervention targets. The set of distributions $\{f^\obs\} \cup \{f^I\}_{I_\cI}$ is \emph{$\cI$-faithful} with respect to a DAG $\cG$ if $f^\obs$ is faithful with respect to $\cG$ and for any $I^k \in \cI$ and disjoint $A, C \subseteq [p]$, we have that $(A \independent \zeta_k \given C \cup \zeta_{[K] \setminus \{k\}})_{\cG^\cI}$ if and only if $f^k(x_A \given x_C) = f^\obs(x_A \given x_C)$.
\end{assumption}

Under the $\cI$-Markov property it holds that $(A \independent \zeta_I \given C \cup \zeta_{\cI \setminus I})_{\cG^\cI}$ implies $f^I(x_A \given x_C) = f^\obs(x_A \given x_C)$. As in the purely observational setting, Assumption~\ref{assumption:faithfulness} gives the assertion that the converse is true. Note that the $\cI$-faithfulness assumption is stronger than Assumption~\ref{ass:identifiability}, but in either case, the set of distributions violating the assumption is degenerate\footnote{Formally, for a linear Gaussian model with Gaussian interventional distributions, the set of parameters violating the assumption has Lebesgue measure zero.}, just as for the faithfulness assumption. Similar faithfulness assumptions have been made in prior work on learning from interventional data with known targets, in particular, Assumption 4.4 and Assumption 4.5 in \citet{YKU18}. Since our algorithm must also learn the intervention targets, it is not surprising that Assumption~\ref{assumption:faithfulness} implies both of these assumptions as special case.

Next we show that UT-IGSP (Algorithm~\ref{alg:igsp}) is consistent under Assumption \ref{assumption:faithfulness}. While a direct proof can be obtained and was developed in a preprint of this work, a simpler proof is now given using the JCI framework of~\cite{mooij2016joint}. In Supplementary Material \ref{appendix:background-knowledge}, we also show that GSP is easily capable of handling the exogeneity and generic context assumptions described in Section \ref{subsec:background-algorithms} without any impact on its consistency guarantees. Hence the JCI framework can be applied to GSP, giving rise to JCI-GSP, which is described in Supplementary Material \ref{app:jci-gsp}. %, all CI tests between intervention variables and system variables are of the form $\rvI_k \independent x_i \mid x_C, \rvI_{[K] \setminus \{k\}}$. We briefly note that in the case of interventions, this CI statement is equivalent to the invariance statement $f^k(x_i \mid x_C) = f^\obs(x_i \mid x_C)$, which allows us to state our $\cI$-faithfulness assumption in the following section in the simpler terms of these invariance statements.
Compared to UT-IGSP, JCI-GSP uses estimated intervention targets in its definition of covered edges. As discussed in Remark \ref{remark:jci-vs-utigsp-edges} and Example \ref{ex:jci-gsp-inconsistent} below, this leads JCI-GSP to be more sensitive to faithfulness violations than UT-IGSP. Consistent with this observation, UT-IGSP achieves superior performance on synthetic data as shown in Section \ref{subsec:simulated-data}. This motivates our introduction of UT-IGSP as the main algorithm in this paper and the use of JCI-GSP as a proof tool.

%\textcolor{red}{edit based on changes before}
%\textcolor{green}{does this part fit now? didn't change much}

\begin{thm}\label{thm:main}
Under Assumption \ref{assumption:faithfulness}, UT-IGSP (Algorithm~\ref{alg:igsp}) and JCI-GSP  are consistent in discovering the $\cI$-Markov equivalence class of the data-generating DAG $\cG_{\pi^*}$ as well as the set of interventional targets in each interventional distribution.
\end{thm}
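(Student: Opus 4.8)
The plan is to reduce the consistency of UT-IGSP to that of JCI-GSP, and the latter in turn to the known consistency of GSP (Theorem in \citet{SWMU17}) applied to the augmented system of variables $(x, \rvI)$. The key observation is that under Assumption \ref{assumption:faithfulness}, the joint distribution $f^\joint$ of the system variables $x$ together with the intervention variables $\rvI$ is faithful to the JCI-DAG $\cG^\joint_*$ in the usual observational sense. Indeed, the $\cI$-faithfulness assumption is precisely the statement that d-separations involving the $\zeta_k$-nodes in $\cG^\cI$ correspond to conditional-invariance equalities of the $f^k$; via the translation between $\cG^\cI$ and $\cG^\joint_*$ (the $\zeta_k$ become the $\rvI_k$, plus the generic-context edges among them, which carry no independence constraints), this becomes ordinary faithfulness of $f^\joint$ to $\cG^\joint_*$. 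So the first step is to make this correspondence precise: show that $f^\joint$ is Markov to $\cG^\joint_*$ (this follows from the $\cI$-Markov factorization \eqref{eq:int}) and faithful to it (this uses Assumption \ref{assumption:faithfulness} for the $\zeta$-separations and the ordinary faithfulness of $f^\obs$ for the separations among system variables only, with a small argument handling mixed separations).

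The second step is to invoke consistency of GSP on the augmented system. Since the paper states (citing Supplementary Material \ref{appendix:background-knowledge}) that GSP can incorporate the exogeneity and generic-context background knowledge without losing consistency, running GSP on $f^\joint$ with these constraints returns the MEC of $\cG^\joint_*$ within the class of DAGs respecting the background knowledge. A short argument is then needed to show that this restricted MEC, together with the fixed edges $\rvI_k \to x_i$ it determines, recovers exactly (i) the $\cI$-MEC of $\cG_{\pi^*}$ and (ii) each intervention target $I^k$: the edge $\rvI_k \to x_i$ is present in the JCI essential graph iff $i \in I^k$ (using adjacency-faithfulness between intervention variables and their children, which Assumption \ref{assumption:faithfulness} implies), and restricting the JCI essential graph to the system variables gives the $\cI$-essential graph by the Yang--Katz--Uhler characterization via $\cG^\cI$. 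This identifies the output of JCI-GSP with the desired $\cI$-MEC and target list.

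The third step handles UT-IGSP itself, which does not simply run GSP on the augmented system: it searches over permutations $\pi$ of the \emph{system} variables, scores them with $S(\pi) = |\cG_\pi| + \sum_k |\cI_\pi^k|$, and restricts moves to $\cI$-covered arrow reversals. The plan is to show (a) $S(\pi)$ is globally minimized exactly on permutations whose minimal I-MAP lies in the $\cI$-MEC of $\cG_{\pi^*}$, and at such $\pi$ one has $\cI_\pi^k = I^k$; and (b) the $\cI$-covered-arrow-reversal moves of UT-IGSP are ``no less powerful'' than the covered-arrow-reversal moves GSP would make on the augmented system, so that the depth-first search cannot get stuck above the minimum. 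For (a): $|\cG_\pi| \ge |\cG_{\pi^*}|$ always, with equality iff $\cG_\pi \sim \cG_{\pi^*}$ (the observational GSP fact), and among such $\pi$, $\cI$-faithfulness forces $\cI_\pi^k \supseteq I^k$ with equality iff all non-intervened conditionals are invariant, which holds precisely when the $\zeta_k$-separations of $\cG^\cI$ are respected — i.e., when $\cG_\pi^\cI \sim \cG_{\pi^*}^\cI$. For (b) one compares the definition of an $\cI$-covered arrow (covered in $\cG_\pi$, plus a conditional-non-invariance condition at $j$ for each known target $I^k_\kn \ni i$) with covered arrows in the JCI-DAG involving only system variables, and checks that any reversal decreasing the augmented edge count corresponds to an $\cI$-covered reversal decreasing $S$.

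The main obstacle I expect is step (b): carefully showing that the restricted search space of UT-IGSP (permutations of system variables only, with the $\cI$-covered-edge restriction and partially known targets $I_\kn^k$) does not create spurious local optima relative to the augmented GSP search. One must argue that every covered-edge reversal available to GSP on $\cG^\joint_\pi$ that strictly lowers the augmented sparsity either corresponds to an $\cI$-covered reversal lowering $S(\pi)$, or is ``irrelevant'' (e.g., a reversal of a generic-context edge among intervention variables, which UT-IGSP never needs because it fixes those), and that the conditional-non-invariance clause in the definition of $\cI$-covered faithfully mirrors the adjacency structure of $\cG^\joint_\pi$ under Assumption \ref{assumption:faithfulness}. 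This requires the translation between ``$f^k$-invariance of the conditional of $x_j$ given its $\cG_\pi$-parents'' and ``$\rvI_k$ non-adjacent to $x_j$ in $\cG^\joint_\pi$,'' together with some care about the known-target refinement $I_\kn^k$, which only shrinks the search space and hence can only help. Once these pieces are in place, consistency in the infinite-sample (oracle) regime for both algorithms follows.
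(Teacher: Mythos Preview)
Your proposal is correct and follows essentially the paper's own route: reduce UT-IGSP to JCI-GSP via the observation that the $\cI$-covered edges of $\cG_\pi$ contain the covered edges of $\cG_\pi^\joint$ (your step (b); the paper states this in one line and justifies it in Remark~\ref{remark:jci-vs-utigsp-edges}), and reduce JCI-GSP to ordinary GSP on the augmented variables by checking that the CI tests it performs coincide with d-separations in $\cG^\joint_*$ (your step~1; the paper does this via the three-way split of CI tests into intervention--intervention, intervention--system, and system--system). Two minor remarks: your step (a) on the global minimizer of $S$ is not needed once JCI-GSP consistency is in hand, since the two scores differ by the constant $\binom{K}{2}$ and the superset-of-moves argument already delivers the minimum; and your aside that the known-target refinement $I_\kn^k$ ``only shrinks the search space and hence can only help'' is backwards---it adds conditions to the $\cI$-covered definition and hence can only \emph{remove} search directions---but the containment in Remark~\ref{remark:jci-vs-utigsp-edges} still goes through, which is all you need.
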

\begin{proof}
It suffices to establish that JCI-GSP is consistent, since at each minimal I-MAP, Algorithm~\ref{alg:igsp} has a superset of the search directions that JCI-GSP does. To establish consistency of JCI-GSP (i.e., that there is a weakly decreasing sequence from every minimal I-MAP of $f^\joint$ to $\cG_{\pi^*}$), it suffices to show that every minimal I-MAP $G^\joint_\pi$ is a minimal I-MAP of $f^\joint$.

In the construction of $G^\joint_\pi$, we may partition the CI tests into three types:
\begin{enumerate}
\vspace{-0.4cm}
    \item between two intervention variables;
    \vspace{-0.2cm}
    \item between an intervention variable and a system variable;
    \vspace{-0.2cm}
    \item between two system variables;
    \vspace{-0.4cm}
\end{enumerate}
The first type of CI test is handled by the background knowledge that the intervention variables are pairwise adjacent.
Since all intervention variables are before system variables, all CI tests between intervention variables and system variables are of the form $\rvI_k \independent x_i \mid x_C, \rvI_{[K] \setminus \{k\}}$. This CI statement is equivalent to the invariance statement $f^k(x_i \mid x_C) = f^\obs(x_i \mid x_C)$, so every CI test of the second type is consistent by the $\cI$-faithfulness assumption. Finally, every CI test of the third type is consistent by the faithfulness assumption on $f^\obs$, which completes the proof.
\end{proof}

\begin{remark}\label{remark:jci-vs-utigsp-edges}
Note that the $\cI$-covered edges of $\cG_\pi^\cI$ are a superset of the covered edges in $\cG_\pi^\joint$. For $i \rightarrow j$ to be covered in $\cG_\pi^\joint$, we must have $j \in \ch_{\cG_\pi^\joint}(\zeta_k)$ for all $k$ such that $i \in I^k_\kn$, i.e. $j \in I^k$. Then, by the definition of an intervention, $f^k(x_j \mid x_{\pa_{\cG_\pi}(j)}) \neq f^\obs(x_j \mid x_{\pa_{G_\pi}(j)})$, so $i \rightarrow j$ is also $\cI$-covered in $G_\pi^\cI$. Thus, UT-IGSP always has at least as many search directions as JCI-GSP. Moreover, UT-IGSP may have strictly more search directions than JCI-GSP, and thus UT-IGSP is consistent under strictly weaker conditions than $\cI$-faithfulness and strictly weaker conditions than required for JCI-GSP. This is demonstrated in the following example.
\end{remark}

%\delete{However, by considering an increased number of search directions, UT-IGSP is consistent under strictly weaker conditions than $\cI$-faithfulness and strictly weaker conditions than required for JCI-GSP. This is demonstrated in the following example.}

\begin{example}\label{ex:jci-gsp-inconsistent}
    Let $\cG = \{ 1 \rightarrow 3, 2 \rightarrow 3 \}$, $I^1 = \{1\}$, and $I^1_\kn = \emptyset$. Suppose $f^\joint$ is faithful to $G^\joint_*$ (equivalently in this case, $(f^k)_{k \in [K]}$ is $\cI$-faithful to $G^\cI$) except that $f^1(x_3) = f^\obs(x_3)$. Then in JCI-GSP, there are no reversible covered edges in $\cG^\joint_{312}$, so JCI-GSP is not consistent. However, UT-IGSP is consistent, since it may reverse $1 \rightarrow 3$ to get to $\cG^\joint_{132}$, then $3 \rightarrow 2$ to get to $G^\joint_*$, as shown in Figure~\ref{fig:jci-gsp-counterexample}.
\end{example}

Our definition of $\cI$-covered edges also differs from the definition of $\cI$-covered edges in IGSP (for known intervention targets). In IGSP, a covered edge $i \rightarrow j$ is considered $\cI$-covered if the marginals of $x_j$ are invariant, i.e., $f^k(x_j) = f^\obs(x_j)$ for $k$ such that $i \in I^k_\kn$. The IGSP definition immediately leads to problems in settings with unknown targets, since this condition is violated if $j \in I^k$. Furthermore, the definitions differ even in settings with no unknown targets. In both algorithms, false negatives when determining $\cI$-covered edges are problematic, since the path to the sparsest I-MAP may be cut off. Our definition, unlike the IGSP definition, adapts to the strength of the interventions, leading to less false negatives when interventions have enough power.

\begin{figure}[t!]
    \centering
    \includegraphics[width=.5\textwidth]{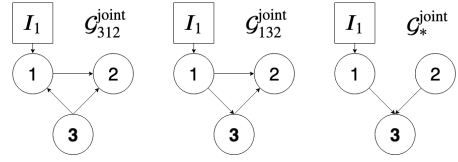}
    \caption{Minimal I-MAPs $\cG_{312}^\joint$, $\cG_{132}^\joint$, and $G_*^\joint$ from Example \ref{ex:jci-gsp-inconsistent}, where UT-IGSP is consistent but JCI-GSP is not.}
    \label{fig:jci-gsp-counterexample}
 %   \vspace{-0.3cm}
\end{figure}

In the following section, we will show that UT-IGSP outperforms JCI-GSP, which suggests that the consistency of UT-IGSP under weaker faithfulness conditions has an effect in the finite-sample case. We will also show that it outperforms IGSP even in settings without off-target effects, which suggests that our definition of $\cI$-covered edges is preferable even in the known-target setting.

% \begin{remark}[Advantages of knowing the intervention targets]
% The advantage of knowing all intervention targets in advance is that it can reduce time complexity and improve sample efficiency. When all $I^\kn_k = \emptyset$, it easily follows that the set of $\cI$-covered edges is the same as the set of covered edges. On the other hand, Figure~\ref{fig:I-covered} shows that when all intervention targets are known a-priori, we can avoid flipping some covered edges. Similarly, when the intervention targets are partially known, not all covered edges are $\cI$-covered, which can be used to reduce the search directions in our greedy search algorithm.
% \end{remark}

% \begin{figure}
% \centering
% \includegraphics[scale=1]{covered_edge.pdf}

% \vspace{-0.3cm}

% \caption{An example showing that knowing the intervention targets can help reduce the search space in our greedy algorithm. Consider the data-generating DAG (left) and the corresponding minimal I-MAP at permutation $\pi=321$ (right) with intervention targets $\cI := \{\emptyset, \{2\}\}$. When all intervention targets are unknown, there are two $\cI$-covered edges, namely $3 \to 2$ and $2 \to 1$. When $\cI$ is known, then only the edge $3 \to 2$ is $\cI$-covered edge.
% } \label{fig:I-covered}
% \end{figure}

\section{EMPIRICAL RESULTS}

\subsection{Simulated data}\label{subsec:simulated-data}

\begin{figure*}[t]
	\centering
	\begin{subfigure}[t]{\textwidth}
	    \includegraphics[width=\textwidth]{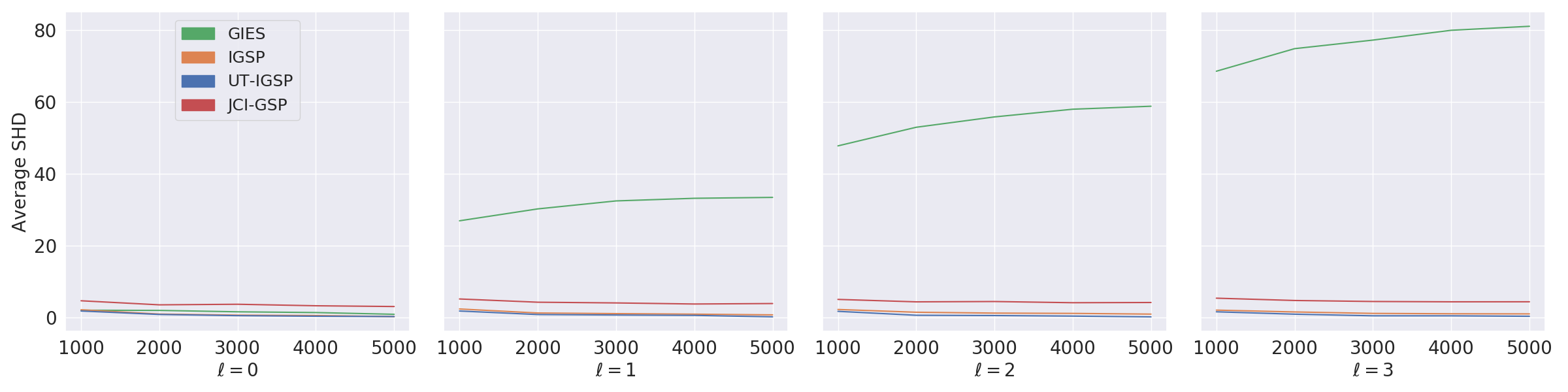}
	    \caption{Average structural Hamming distance from the true $\cI$-essential graph}
	\end{subfigure}
	\begin{subfigure}[t]{\textwidth}
	    \includegraphics[width=\textwidth]{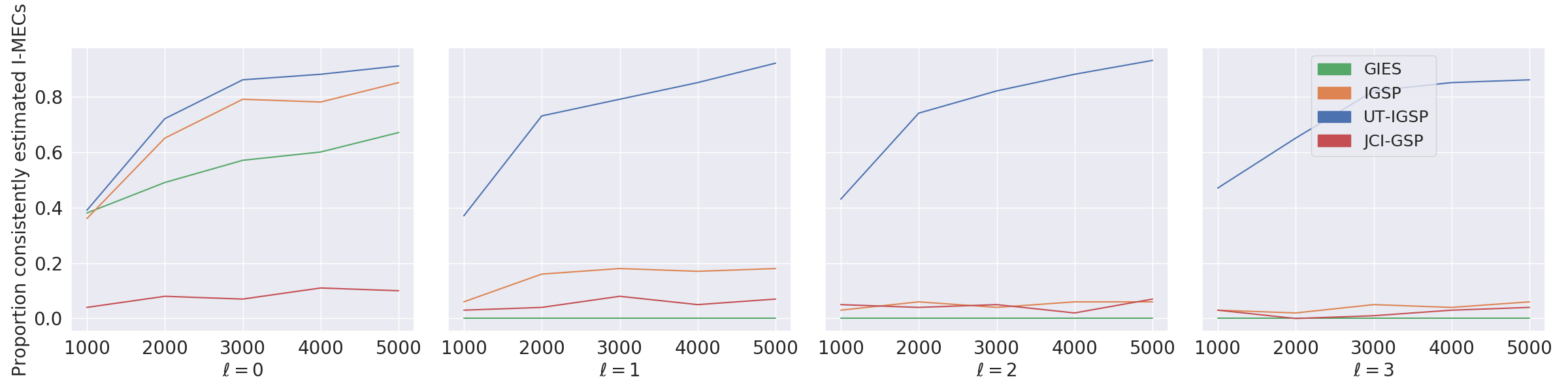}
	    \caption{Proportion of correctly estimated $\cI$-MECs}
	\end{subfigure}
	
	\caption{Performance of different methods as a function of number of samples and number of off-target effects ($\ell$) for 100 Gaussian DAG models on \nnodes~nodes. (a) corresponds the average Hamming distance between the estimated $\cI$-essential graph and the true $\cI$-essential graph, (b) corresponds to the proportions of consistently estimated $\cI$-MECs within the 100 randomly generated Gaussian DAG models.} \label{fig:simulations}
%	\vspace{-0.4cm}
\end{figure*}

In this section, we compare %Unknown-Target Interventional Greedy Sparsest Permutation (
UT-IGSP with prior algorithms that assume known intervention targets, namely GIES \citep{HB12} and IGSP \citep{YKU18}, and also JCI-GSP which handles unknown intervention targets, on the task of determining the $\cI$-MEC from interventional data with partially known targets. In this simulation study, we consider data from a linear structural equation model with Gaussian noise, i.e.
\begin{align*}
X = W^T X + \epsilon,
\end{align*}
where the matrix $W$ is upper-triangular  with $W_{ij} \neq 0$ if and only if $i \to j \in \cG_{\pi^*}$ and $\epsilon \sim \cN(0, I_p)$. For each simulation setting, we generated 100 realizations of Erd\"os-R\'enyi DAGs with expected neighborhood size $s=$1.5 and $p = \nnodes$ nodes. To each edge we assigned a  weight $W_{ij}$ sampled independently at random from the uniform distribution on $[-1, -.25] \cup [.25, 1]$, % to each randomly generated edge $i \rightarrow j$ in the DAG to e
ensuring that the edge weights are bounded away from zero. For each DAG, we generated a list of intervention targets $\cI =\{I^1, \cdots, I^5\}$. We first generated known intervention targets $I^1_\kn, \cdots, I^5_\kn$ by randomly picking $5$ nodes from the node set $[p]$ without replacement and assigning one intervention to each of $I^1_\kn, \cdots, I^5_\kn$. Then we generated each set of unknown intervention targets $I^k_\un$ by picking $\ell = 0, \cdots, 3$ nodes from the set $[p] \setminus I^k_\kn$.
Given target $I_k = I_k^\kn \cup I_k^\un$, we generated the interventional distribution via the \emph{shift intervention} model. More precisely, for each node $i \in I_k$, we change its internal noise variance $\epsilon_i$ from mean 0 to mean 1. The shift in mean makes for a simple, easy-to-understand setting, and in the genetic setting, can be thought of as resulting from a gene overexpression experiment.
In each study, we compared different algorithms for $n$ samples from each interventional distribution with $n = 1000,~2000, \ldots,~5000$. 

In each simulation, we ran GIES with its default parameters from the package \texttt{pcalg}. For UT-IGSP and IGSP, we chose a significance level of $\alpha = 10^{-5}$. 

\begin{figure*}
	\centering
	\begin{subfigure}[t]{.45\textwidth}
	\includegraphics[width=\textwidth]{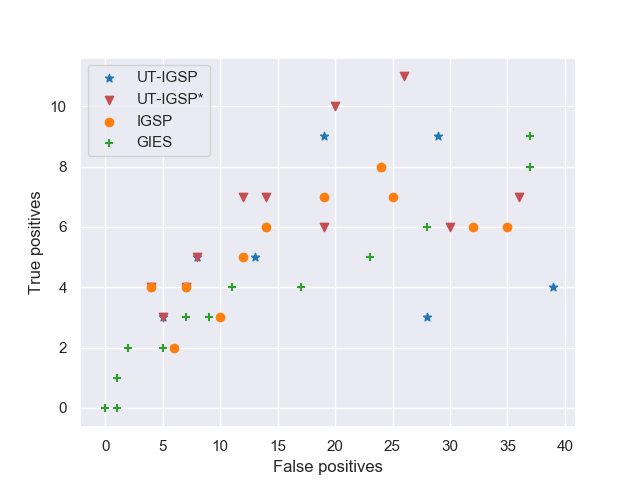}
	    \caption{Directed edge recovery}
	\end{subfigure}
	~
	\begin{subfigure}[t]{.45\textwidth}
	\includegraphics[width=\textwidth]{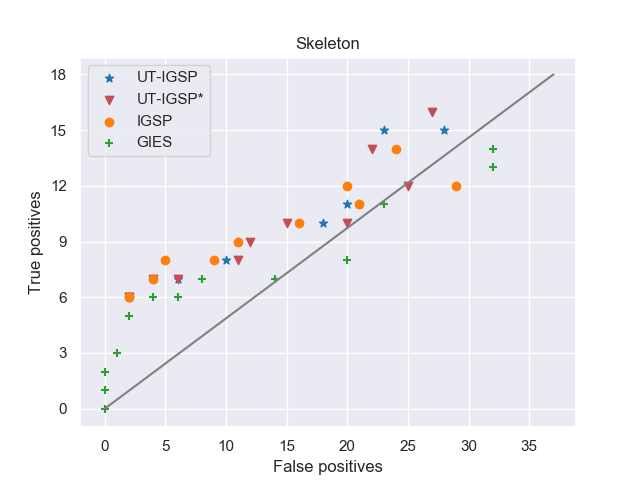}
	    \caption{Skeleton recovery}
	\end{subfigure}
%	\vspace{-0.3cm}
	\caption{ROC curves for models estimated by GIES, IGSP, and UT-IGSP. UT-IGSP* indicates the results of running UT-IGSP with no intervention targets specified. The solid line corresponds to %the accuracy achieved by 
	random guessing.} \label{fig:sachs}
	\vspace{-0.1cm}
\end{figure*}

Figure~\ref{fig:simulations} shows the structural Hamming distance\footnote{Given two partially directed graphs, the SHD measures the minimum number of edge additions/deletions/conversions between directed and undirected to convert one graph to the other. Therefore, larger SHD means worse performance.} (SHD) of the causal graphs estimated by each algorithm as well as the proportion of consistently estimated $\cI$-MECs as a function of number of samples for the 4 methods. For GIES and IGSP, the $\cI$-essential graph is with respect to known intervention targets, while for UT-IGSP, the $\cI$-essential graph is with respect to known and estimated intervention targets. As expected, when off-target effects exist, UT-IGSP outperforms all other methods. Even with no off-target effects ($\ell = 0$), UT-IGSP outperforms the other methods, suggesting that our definition of $\cI$-covered edges combines well with sparsity-based search. JCI-GSP, although consistent as $n\to\infty$ (Theorem~\ref{thm:main}), performs poorly across all regimes. Analyzing particular cases suggests that this is due to the definition of covered edges in JCI-GSP, which allows the estimated intervention targets to drastically restrict the search space. When the conditional invariance test experiences false negatives (e.g.~due to finite sample size), JCI-GSP tends to cut off paths to the true DAG. Notably, the performance of GIES degrades drastically with increasing off-target effects. In contrast, the performance of IGSP degrades only slightly, suggesting that this method is more robust to the influence of off-target effects. 
% Additional figures showing the number of errors that UT-IGSP makes in recovering the set of intervention targets are provided and discussed in the Supplementary Material.
Results for the task of intervention target recovery and for perfect interventions are provided in Supplementary Material \ref{appendix:additional-evaluation}. Finally, we note that UT-IGSP scales well on sparse graphs: the average runtime for the 20-node graphs considered here is below 1 second per graph, and is only 20 seconds for $p = 100$, $\ell = 3$, and $s = 1.5$.

\subsection{Biological data}

We evaluated Algorithm \ref{alg:igsp} on a protein mass spectroscopy dataset acquired from cells from the human immune system \citep{sachs2005causal}. The dataset contains 7466 samples measuring the abundance of phosphoproteins and phospholipids under different experimental conditions. These conditions are generated by inhibiting or activating different proteins in the protein signalling network as well as receptor enzymes via various reagents. This allows us to treat data collected from different experiments as data generated from different interventional distributions. Since some of the interventional experiments intervened on both receptor enzymes and signalling proteins and some experiments intervened only on enzymes, in this study, we %only  consider a subset of experiments in which the interventions on receptor enzymes are identical. This allows us to 
define the observational dataset as the experiment for which only the receptor enzymes were perturbed, while the other 8 interventional datasets correspond to experiments where the signaling molecules have also been perturbed, as described previously in \cite{WSYU17}. This division gives 1755 observational samples and 4091 interventional samples. A conventionally accepted ground-truth network is reported in \citet{sachs2005causal}.

In Figure~\ref{fig:sachs}, we plot the ROC curves of UT-IGSP, IGSP, and GIES for the true DAG and its skeleton. As expected, both IGSP and UT-IGSP outperform GIES in discovering the skeleton as well as directed edges, since they are both nonparametric approaches that allow for non-linear functional relationships. On the other hand, the performance of IGSP and UT-IGSP is comparable. This indicates that the protein signalling data collected by~\citet{sachs2005causal} may not contain off-target effects; consistent with the fact that these experiments were carefully designed to avoid off-target effects. In this setting, UT-IGSP does not have an advantage over the IGSP algorithm. Finally, we ran UT-IGSP without any intervention targets specified, denoted as UT-IGSP*. We found that the performance of UT-IGSP and UT-IGSP* is similar, suggesting that our algorithm may be useful in applications where off-target effects are expected. 
% \new{In fact, UT-IGSP* seems to slightly outperform UT-IGSP. A potential explanation of this finding is that ``forgetting" intervention targets allows UT-IGSP* more search directions at each step, which may make it easier to find a path along which the score decreases. A more detailed study of the trade-off between including prior knowledge and constraining the search space is left to future work.}

\section{DISCUSSION}
In this paper, we presented a new algorithm with theoretical consistency guarantees to learn the interventional Markov equivalence class in the presence of off-target effects. %We also provided consistency guarantees under this more relaxed interventional setting. 
We showed that the $\cI$-Markov equivalence class is identifiable even without prior knowledge of the intervention targets, a theoretical result of independent interest~\citep{EM07,Mein16,RHPM15,GSKZ17}. The application of our method to the analysis of protein signaling data suggests that it is a viable tool for biological data analysis. 
%Our results are of interest for applications beyond %is not only of interest in 
%biological data analysis, where off-target effects are unavoidable, and also from a purely theoretical perspective%but it also represents an important question in other applications and from a purely theoretical perspective
%~\citep{EM07,Mein16,RHPM15,GSKZ17}. 

Our method is of relevance beyond the analysis of interventional data in genomics. For example, our method can be used to learn causal graphs when data is generated from heterogeneous observational sources collected from naturally perturbed systems, since we can take each source as an interventional distribution with imperfect interventions and unknown intervention targets. Examples include gene expression data from normal and diseased states or stock data before and after a financial crisis. %This gives our algorithm a broader impact in a wider range of applications. 
In the future, it would interesting from a theoretical and practical perspective to extend UT-IGSP to handle latent confounding and to apply UT-IGSP to other data sets.

\subsubsection*{Acknowledgements}
Chandler Squires was supported by an NSF Graduate Research Fellowship and an MIT Presidential Fellowship. Caroline Uhler was partially supported by NSF (DMS-1651995), ONR (N00014-17-1-2147 and N00014-18-1-2765), IBM, a Sloan Fellowship and a Simons Investigator Award. We thank the reviewers of an early version of this paper for pointing out the connection of our algorithm to Joint Causal Inference ~\citep{mooij2016joint}, which we used to obtain simplified proofs of our results.

\bibliography{bib}

\clearpage
\appendix
\newcommand{\snum}{S}
\renewcommand{\theequation}{\snum.\arabic{equation}}
% \counterwithin{algorithm}{section}
% \counterwithin{figure}{section}

{\Large\textbf{Supplementary Material}}

\section{DAG models}\label{appendix:dag-models}
{\bf d-separation.} For a triple of nodes $(i,j,k)$ in a graph $\cG$ such that $i \to k \leftarrow j$, we call $k$ a \emph{collider}. Given a DAG $\cG$, we say that the two nodes $i$ and $j$ are d-connected given a set of nodes $S$ if there exists a directed path that connects $i$ and $j$ such that every non-collider on the path is not in $S$ and that for every collider $k$ on the path, we have that either $k \in S$ or some descendant of $k$ is in $S$. Given disjoint subsets $A$, $B$, and $C$, we say $A$ and $B$ are d-connected given $C$, denoted by $(A \not\independent B \given C)_\cG$, if there exists a d-connecting path given $C$ between any $a \in A$ and $b \in B$. Otherwise, we say $A$ and $B$ are \textit{d-separated}, denoted $(A \independent B \given C)_\cG$.

{\bf Independence map.} For two DAGs $\cG$ and $\cH$, if the set of distributions Markov with respect to $\cG$ is a subset of the distributions Markov with respect to $\cH$, i.e., $\cM(\cG) \subseteq \cM(\cH)$, we call $\cH$ an \textit{independence map} of the DAG $\cG$, denoted as $\cG \leq \cH$. Based on the Markov property we can also conclude that if $\cG \leq \cH$, the set of conditional independence relations entailed by $\cH$ is a subset of the conditional independence relations entailed by $\cG$.

\section{Assumption \ref{ass:identifiability} Incomparability}\label{appendix:incomparability}
We reproduce the assumptions of \citet{YKU18} here:
\begin{assumption*}[4.4 of \citet{YKU18}]
    Let $I^k \in \cI$ with $i \in I^k$. Then $f^k(x_j) \neq f^\obs(x_j)$ for all descendants $j$ of $i$.
\end{assumption*}
\begin{assumption*}[4.5 of \citet{YKU18}]
    Let $I^k \in \cI$ with $i \in I^k$. Then $f^k(x_j \mid x_S) \neq f^\obs(x_j \mid x_S)$ for any child $j$ of $i$ s.t. $j \not\in I^k$ and for all $S \subseteq \ne_{\cG^*}(j) \setminus \{ i \}$, where $\ne_{\cG^*}(j)$ denotes the neighbors of node $j$ in $\cG^*$/
\end{assumption*}
Example \ref{ex:jci-gsp-inconsistent} satisfies Assumption \ref{ass:identifiability}. It does not satisfy Assumption 4.4, since $3$ is a descendant of $1$ but $f^1(x_3) = f^\obs(x_3)$. It does not satisfy Assumption 4.5, since $3$ is a child of $1$, $S = \emptyset$ is a subset of the neighbors of $3$, and again $f^1(x_3) = f^\obs(x_3)$. Thus, Assumption \ref{ass:identifiability} does not imply either Assumption 4.4 or 4.5.

Let $\cG = \{ 1 \rightarrow 2 \}$ and $I^1 = \{2\}$. Let $f^1(x_2) = f^\obs(x_2)$. Then $f$ satisfies Assumption 4.4 and 4.5, since $2$ has no children/descendants, but it does not satisfy Assumption \ref{ass:identifiability}. Thus, Assumption 4.4 and 4.5 do not imply Assumption~\ref{ass:identifiability}.

\section{Conditional Invariance Testing}\label{appendix:conditional-invariance-testing}

For a multivariate Gaussian distribution $f^\obs$, all conditional distributions are also Gaussian, with mean given by a linear combination of the variables in the conditioning set, i.e., $X_i \mid X_C \sim \cN(\beta_{i\mid C} X_C + b_{i\mid C}, \sigma_{i \mid C}^2)$. Thus, two conditional distributions $f^1$ and $f^2$ are the same if and only if the regression coefficients are the same ($H_c$: $\beta^1_{i\mid C} = \beta^2_{i \mid C}$ and $b^1_{i \mid C} = b^2_{i \mid C}$) and the variance are the same ($H_v$: $\sigma^1_{i \mid C} = \sigma^2_{i \mid C}$). By applying Bonferroni correction, to test the null hypothesis $f^1 = f^2$ at significance level $\alpha$, we may test $H_c$ and $H_v$ both at significance level $\frac{\alpha}{2}$. Both $H_c$ and $H_v$ have well-known exact tests; the Chow test and F test, respectively.

\section{Background Knowledge in GSP}\label{appendix:background-knowledge}

\begin{algorithm}[!b]
	\caption{GSP}
	\label{alg:gsp}
	\begin{algorithmic}
		\State{\textbf{Input:} Distribution $f$ and starting permutation $\pi_0$.}
		\State{\textbf{Output:} A permutation $\pi$ and associated minimal I-MAP $\cG_{\pi}$.}
		\State 1. Set $\pi := \pi_0$;
		\State 2. Using a depth-first search with root $\pi$, search for a permutation $\tau$ such that $S(\tau) < S(\pi)$ and that the corresponding minimal I-MAP $\cG_\tau$ is connected to $\cG_\pi$ by a list of $\cI$-covered arrow reversals. If such $\tau$ exists, set $\pi$ as $\tau$ and continue this step; otherwise, return $\pi$, $\cG_\pi$.
	\end{algorithmic}
\end{algorithm}

\subsection{Consistency of GSP}\label{appendix:gsp-proof}
Algorithm \ref{alg:gsp} describes the Greedy Sparsest Permutation (GSP) algorithm when there is no background information. The algorithm was originally introduced and proven to be consistent in \citet{SWMU17}. We now review a simplified proof, so that we have a reference point for proving consistency after adding background knowledge.

Given a DAG $\cG$ and an IMAP $\cH$ of $\cG$, a \emph{Chickering sequence} from $\cG$ to $\cH$ is a sequence of DAGs
$\cG = \cG_0, \cG_1, \ldots, \cG_{M-1}, \cG_M = \cH$ such that $\cG_i$ is an IMAP of $\cG_{i-1}$ and $\cG_i$ is obtained from $\cG_{i-1}$ by either the addition of an edge or a covered edge reversal. \citet{Chickering02} proved the existence of a Chickering sequence between a DAG $\cG$ and any IMAP $\cH$ of $\cG$ by repeated application of the \textsc{ApplyEdgeOperation} algorithm, reproduced in Algorithm \ref{alg:apply-edge-op}. 

\begin{algorithm*}[!t]
	\caption{\textsc{ApplyEdgeOperation}}
	\label{alg:apply-edge-op}
	\begin{algorithmic}
		\State \textbf{Input:} DAGs $\cG$ and $\cH$ where $\cG \leq \cH$ and $\cG\neq \cH$.
        \State \textbf{Output:} A DAG $\cG^\prime$ satisfying $\cG^\prime\leq \cH$ that is given by reversing an edge in $\cG$ or adding an edge to $\cG$.
        \State 1. Set $\cG^\prime := \cG$.
        \State 2. While $\cG$ and $\cH$ contain a node $Y$ that is a sink in both DAGs and for which $\pa_\cG(Y) = \pa_\cH(Y)$, remove $Y$ and all incident edges from both DAGs.
        \State 3. Let $Y$ be any sink node in $\cH$. 
        \State 4. If $Y$ has no children in $\cG$, then let $X$ be any parent of $Y$ in $\cH$ that is not a parent of $Y$ in $\cG$.  
        	Add the edge $X\rightarrow Y$ to $\cG^\prime$ and return $\cG^\prime$. 
        \State 5. Let $D\in\descendants_\cG(Y)$ denote the (unique) maximal element from $\descendants_\cG(Y)$ within $\cH$.  
        	Let $Z$ be any maximal child of $Y$ in $\cG$ such that $D$ is a descendant of $Z$ in $\cG$.  
        \State 6. If $Y\rightarrow Z$ is covered in $\cG$, reverse $Y\rightarrow Z$ in $\cG^\prime$ and return  $\cG^\prime$.
        \State 7. If there exists a node $X$ that is a parent of $Y$ but not a parent of $Z$ in $\cG$, then add $X\rightarrow Z$ to $\cG^\prime$ and return $\cG^\prime$.
        \State 8. Let $X$ be any parent of $Z$ that is not a parent of $Y$. Add $X\rightarrow Y$ to $\cG^\prime$ and return $\cG^\prime$.  
	\end{algorithmic}
\end{algorithm*}

To show that GSP is consistent, we note that $S(\pi) = |\cG_\pi|$ reaches its minimum only if $\cG_\pi \in \cM(\cG^*)$, as shown in \cite{SWMU17}. Thus, it suffices to show that from any $\pi_0$ s.t. $\cG_{\pi_0} \not\in \cM(\cG^*)$, there is some $\pi_1$ connected to $\pi_0$ by covered arrow reversals s.t. $\cG_{\pi_1}$ has fewer edges than $\cG_{\pi_0}$. This follows readily from the existence of the Chickering sequence: we may take the highest index DAG in the sequence that is not a minimal IMAP of $\cG$. Such a DAG is guaranteed to exist: since $|\cG_{\pi_0}| > |\cG|$, there is at least one edge addition in the Chickering sequence.

In this section, we consider adding background knowledge of the following forms:
\begin{enumerate}
    \item \textbf{Known Adjacencies}: $i$ is adjacent to node $j$.
    \item \textbf{Known Order Information}: For the partition $U, V$ of $[p]$, $U <_{\pi^*} V$, i.e. if $i \in U$ and $j \in V$, then $j$ is not an ancestor of $i$ in $\cG^*$.
\end{enumerate}

\begin{algorithm}[!b]
	\caption{GSP + Background}
	\label{alg:gsp-background}
	\begin{algorithmic}
		\State{\textbf{Input:} Distribution $f$, starting permutation $\pi_0$, set of adjacent pairs $A$, partition $U, V$}
		\State{\textbf{Output:} A permutation $\pi$ and associated minimal I-MAP $\cG^\bg_{\pi}$.}
		\State 1. Set $\pi := \pi_0$;
		\State 2. Using a depth-first search with root $\pi$, search for a permutation $\tau$ such that $S_\bg(\tau) < S_\bg(\pi)$ and that the corresponding minimal I-MAP $\cG^\bg_\tau$ is connected to $\cG^\bg_\pi$ by a list of $\cI$-covered arrow reversals. If such $\tau$ exists, set $\pi$ as $\tau$ and continue this step; otherwise, return $\pi$, $\cG^\bg_\pi$.
	\end{algorithmic}
\end{algorithm}

Suppose we are given this background knowledge in the form $A = \{ (i, j) \mid i \sim j \in \cG^* \}$ and two sets $U$ and $V$. It is easy to adapt GSP so that its output satisfies these constraints. First, we define 
\begin{align*}
\cG^\bg_\pi = \{ i \rightarrow j \mid &i <_\pi j ~\textand~ 
\\
&(i \not\independent j \mid \pre_\pi(\{i, j\}) ~\textor~ (i, j) \in A) \}
\end{align*}

Then, we define $S_\bg(\pi) = \infty$ if $j <_\pi i$ for $i \in U$, $j \in V$, and $S_\bg(\pi) = |G^\bg_\pi|$ otherwise, and use this score in place of $S$. The modified algorithm is described in Algorithm~\ref{alg:gsp-background}.

Now we show that these adaptations retain consistency of GSP. The case of known adjacencies are simple. 

Since any IMAP $\cH$ of $\cG^*$ satisfies $\skel(\cG^*) \subseteq \skel(\cH)$ \citep{raskutti2018learning}, no edge $i - j$ in $K$ is deleted over the course of GSP, i.e. not testing a CI statement between $i$ and $j$ does not change the result.

No we consider the case of known order information. We show that if $\cG^*$ and $\cH$ both satisfy the known order information, then any DAG in a Chickering sequence from $\cG$ to $\cH$ will satisfy the known order information, which extends to the sequence of minimal IMAPs from some starting $\cG_\pi$ to $\cG^*$ given in the previous section.

If $\cG_m$ in the Chickering sequence satisfies the order information, there are only two scenarios in which $\cG_{m+1}$ will not: an edge $i \rightarrow j$ with $i \in A$ and $j \in B$ is reversed, or an edge is added from $j \in B$ to $i \in A$. We will show that neither scenario happens. The \textsc{ApplyEdgeOperation} algorithm only reverses edges to be in the same direction as they are in $\cH$, so the first situation never happens. The only case in which \textsc{ApplyEdgeOperation} adds an edge that is opposite its orientation in $\cH$ is in Step 8: $Y$ is a sink in $\cH$, $Z$ is a child of $Y$ in $\cG_m$, and $Y \rightarrow Z$ is not covered in $\cG_m$ because of a parent $X$ of $Y$ in $\cG_m$ that is not a parent of $\cH$ in $\cG_m$. $\cG_{m+1}$ violates the known order information only if $Z \in A$ and $X \in B$. We have two cases: if $Y \in A$, then $X \in A$ by the assumption that $\cG$ satisfies the order information. If $Y \in B$, then $Z \in B$ by the assumption that $\cG$ satisfies the known order information. Thus, in both cases, $\cG_{m+1}$ still satisfies the known order information.

\section{JCI-GSP}\label{app:jci-gsp}
For completeness, we outline JCI-GSP in Algorithm \ref{alg:jci-gsp}. Line 1 introduces variables $\zeta_k$ for each interventional setting $k \in [K]$ and lifts the distribution over $X$ to a distribution over $X$ and $\zeta$. Line 2 combines these distributions into a mixture distribution, the choice  of the uniform distribution is arbitrary for the population case. With finite samples, the weights may be picked according to the number of samples from each observational/interventional setting. Line 3 forms a permutation over both $\zeta$ and $X$ by pre-pending $\pi_0$ with an arbitrary order of the $\zeta$ variables. Line 4 encodes the \textit{generic context} background knowledge, and Line 5 encodes the background knowledge about known intervention targets. Finally, Line 7 calls GSP with the appropriate background knowledge.

\begin{algorithm}[tb]
	\caption{JCI-GSP}
	\label{alg:jci-gsp}
	\begin{algorithmic}
		\State{\textbf{Input:} Distributions $f^\obs, f^1, \cdots, f^K$ and partially known intervention sets $\cI^\kn := \{I_\kn^1, I_\kn^2 \cdots, I_\kn^K\}$, a starting permutation $\pi_0$.}
		\State{\textbf{Output:} A permutation $\pi$ and associated minimal I-MAP $\cG_{\pi}$, a complete set of estimated intervention targets $\cI := \{I^1_\pi, \cdots, I^K_\pi \}$.}
		\State 1. Let $g^\obs = \kron_{\zeta = 0} \otimes f^\obs$, $g^k = \kron_{\zeta_k = 1, \zeta_{\neg k} = 0} \otimes f^k$ for $k \in [K]$.
		\State 2. Let $g = \frac{1}{K+1} (g^\obs + \sum_{k \in [K]} g^k)$
		\State 3. Set $\pi_0' = \langle \zeta_1, \ldots, \zeta_K \rangle \cdot \pi_0$
		\State 4. Set $A_{\textrm{interventions}} = \{ \zeta_i \sim \zeta_j \mid i, j \in [K], i \neq j \}$
		\State 5. Set $A_{\textrm{targets}} = \cup_{k \in [K]} \{ \zeta_k \sim i \mid i \in I_\kn^k \}$
		\State 6. Set $A = A_{\textrm{interventions}} \cup A_{\textrm{targets}}$
		\State 7. Run GSP + Background with distribution $g$, starting permutation $\pi_0'$, adjacent pairs $A$, and partition $(\zeta, X)$.
	\end{algorithmic}
\end{algorithm}

\begin{figure}[!b]
	\centering
    \includegraphics[width=.5\textwidth]{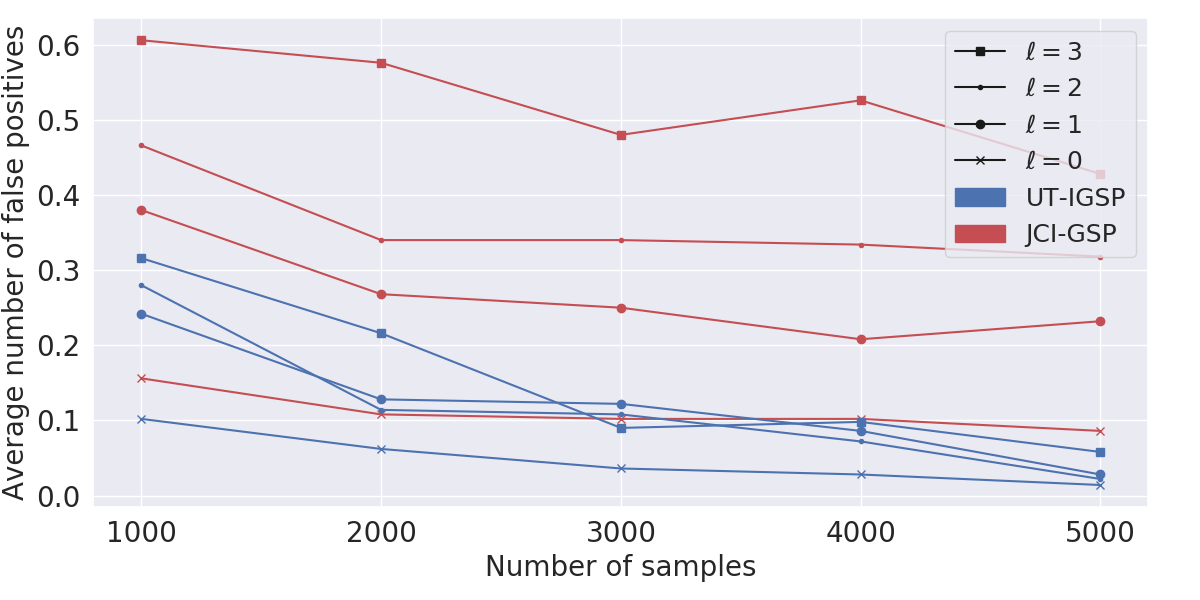}
	\caption{Performance of UT-IGSP and JCI-GSP at the task of intervention target recovery as a function of number of samples and number of off-target effects ($\ell$).} 
	\label{fig:target-recovery}
\end{figure}

\begin{figure*}[!h]
	\centering
% 	\begin{subfigure}[t]{\textwidth}
% 	    \includegraphics[width=\textwidth]{}
% 	    \caption{Average structural Hamming distance from the true $\cI$-essential graph}
% 	\end{subfigure}
% 	\begin{subfigure}[t]{\textwidth}
% 	    \includegraphics[width=\textwidth]{}
% 	    \caption{Proportion of correctly estimated $\cI$-MECs}
% 	\end{subfigure}
% 	\caption{Performance of different methods as a function of number of samples and number of off-target effects ($\ell$) for 100 nonlinear DAG models on \nnodes~nodes. (a) corresponds the average Hamming distance between the estimated $\cI$-essential graph and the true $\cI$-essential graph, (b) corresponds to the proportions of consistently estimated $\cI$-MECs within the 100 randomly generated nonlinear DAG models.} \label{fig:simulations-nonlinear}
	\begin{subfigure}[!t]{\textwidth}
	    \includegraphics[width=\textwidth]{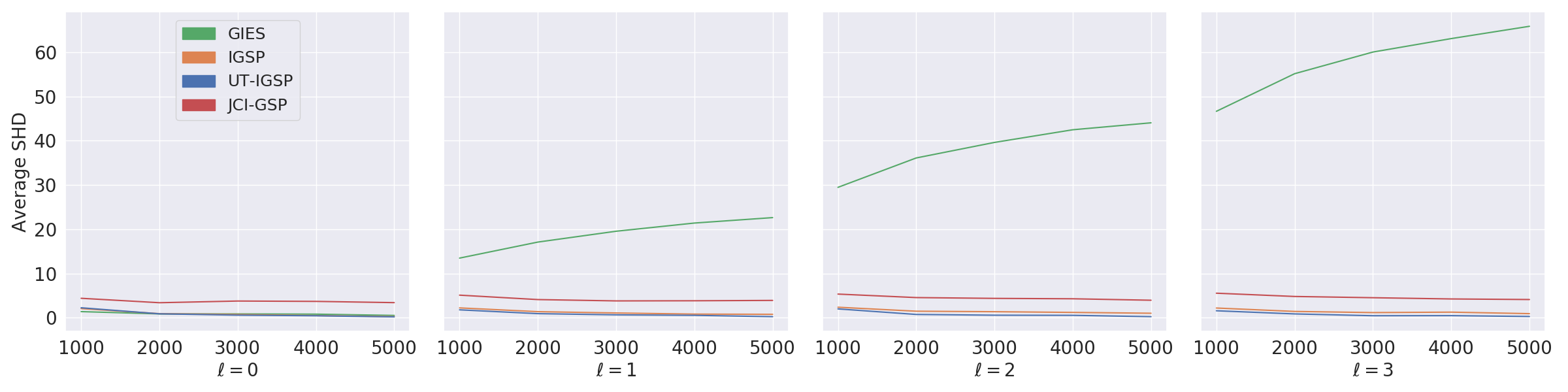}
	    \caption{Average structural Hamming distance from the true $\cI$-essential graph}
	\end{subfigure}
	\begin{subfigure}[t]{\textwidth}
	    \includegraphics[width=\textwidth]{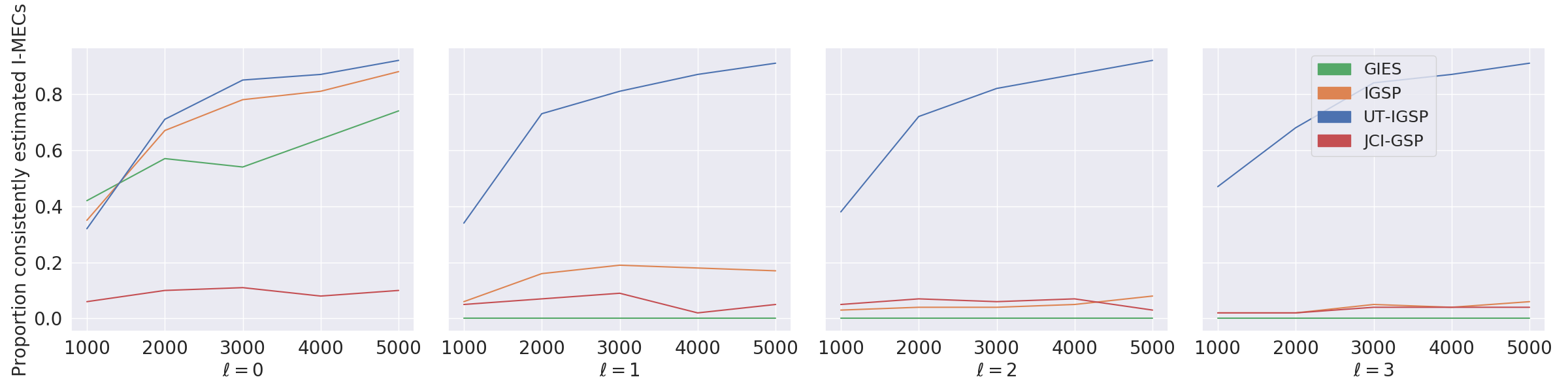}
	    \caption{Proportion of correctly estimated $\cI$-MECs}
	\end{subfigure}
	\caption{Performance of different methods as a function of number of samples and number of off-target effects ($\ell$) for 100 Gaussian DAG models on \nnodes~nodes. (a) corresponds the average Hamming distance between the estimated $\cI$-essential graph and the true $\cI$-essential graph, (b) corresponds to the proportions of consistently estimated $\cI$-MECs within the 100 randomly generated Gaussian DAG models.} \label{fig:simulations-perfect}
%	\vspace{-0.4cm}
\end{figure*}
% \begin{figure*}[!htb]
% 	\centering
%
% %	\vspace{-0.4cm}
% \end{figure*}

\section{Additional Evaluation}\label{appendix:additional-evaluation}

\subsection{Intervention Recovery}\label{appendix:intervention-recovery}
In Fig. \ref{fig:target-recovery}, we use the same data generated in Section \ref{subsec:simulated-data}, and report the number of false positive intervention targets for UT-IGSP. The average number of false negatives was negligble ($< .04$) for both methods.

\subsection{Perfect Interventions}\label{appendix:perfect-intervention-results}
In this section, we sample Gaussian DAG models and intervention targets in the same manner as described in Section \ref{subsec:simulated-data}. However, instead of using shift interventions, we use \textit{perfect interventions}. In particular, for $i \in I^k$, we completely remove the dependency between an intervened node and its parents (i.e., set $B_{ji} = 0$ for $j \in \pa_{\cG^*}(i)$), and change its internal noise variance to $\epsilon_i \sim \cN(1, 0.1)$. GIES was designed specifically for learning from perfect interventions, making perfect interventions a more fair comparison than shift interventions. Indeed, GIES performs better when $\ell = 0$ than it did for shift interventions, even outperforming UT-IGSP when $n = 1,000$. However, the overall trends remain: UT-IGSP outperforms GIES when the number of samples becomes large, and the performance of GIES is drastically reduced by even a single off-target intervention.

\end{document}